\newcommand{\Tr}{\mathrm{Tr}}
\def\yg{\mathbf{y}}
\def\dg{\mathbf{d}}
\def\Mg{\mathbf{M}}
\def\Ng{\mathbf{N}}
\def\Pg{\mathbf{P}}
\def\Ig{\mathbf{I}}
\def\Qg{\mathbf{Q}}
\def\Rg{\mathbf{R}}
\def\Fg{\mathbf{F}}
\def\Cg{\mathbf{C}}
\newcommand{\mR}{\mathbf R}
\newcommand{\beq}{\begin{equation}}
\newcommand{\eeq}{\end{equation}}
\newcommand{\be}{\begin{equation}}
\newcommand{\ee}{\end{equation}}
\newcommand{\bea}{\begin{eqnarray}}
\newcommand{\eea}{\end{eqnarray}}
\newcommand{\br}{\begin{eqnarray}}
\newcommand{\er}{\end{eqnarray}}
\newcommand{\brs}{\begin{eqnarray*}}
\newcommand{\ers}{\end{eqnarray*}}
\newcommand{\ba}{\begin{array}}
\newcommand{\ea}{\end{array}}
\newcommand{\al}{\alpha}
\newcommand{\siN}{\sum_{i=1}^N}
\newcommand{\sjn}{\sum_{j=1}^n}
\newcommand{\dii}{\dg_i\dg_i^T}
\newcommand{\cii}{\yg_i\yg_i^T}
\newcommand{\ciMv}{\yg_i^T\Mg^{-1}\yg_i}
\newcommand{\bed}{\begin{description}}
\newcommand{\eed}{\end{description}}
\newcommand{\iy}{\infty}
\def\al{\alpha}
\def\EOP{\ \hfill \rule{0.5em}{0.5em} }
\newenvironment{proof}[1][\!]{\noindent {\em Proof of #1. }}{\hfill $\blacksquare$ \vskip 3pt}
\newtheorem{thm}{Theorem}[section]
\newtheorem{lemma}[thm]{Lemma}
\newtheorem{rmk}[thm]{Remark}
\newtheorem{deff}[thm]{Definition}
\begin{document}

\title{On the convergence of Maronna's $M$-estimators of scatter}

\author{Yacine Chitour, Romain Couillet~\IEEEmembership{Member,~IEEE,} and Fr\'ed\'eric Pascal~\IEEEmembership{Senior Member,~IEEE}\thanks{Chitour is with Laboratoire des Signaux et Syst\`emes at Sup\'elec, 91192 Gif s/Yvette, France and Universit\'e Paris Sud, Orsay, France {\tt yacine.chitour@lss.supelec.fr}. Couillet is with the Telecommunications Department at Sup\'elec {\tt romain.couillet@supelec.fr}. Pascal is with the SONDRA laboratory at Sup\'elec {\tt frederic.pascal@supelec.fr}. Chitour and Pascal's works were partially supported by the iCODE institute, research project of the Idex Paris-Saclay, while Couillet's work is funded by ERC--MORE EC--120133.}}

\maketitle

\begin{abstract}
In this paper, {we propose an alternative proof for the uniqueness} of Maronna's $M$-estimator of scatter \cite{maronna76robust} for $N$ vector observations $\yg_1,\ldots,\yg_N\in\mR^m$ under a mild constraint of linear independence of any subset of $m$ of these vectors. This entails in particular almost sure uniqueness for random vectors $\yg_i$ with a density as long as $N>m$. {This  approach allows to establish further relations that demonstrate that a properly normalized Tyler's $M$-estimator of scatter \cite{tyler1987distribution} can be considered as a limit of Maronna's $M$-estimator. More precisely, the contribution is to show that each $M$-estimator, verifying some mild conditions, converges towards a particular Tyler's $M$-estimator.} These results find important implications in recent works on the large dimensional (random matrix) regime of robust $M$-estimation.
\end{abstract}

\section{Introduction}
Subsequent to Huber's introduction of robust statistics in \cite{huber1964robust}, Maronna proposed in \cite{maronna76robust} a class of robust estimates for scatter matrices defined as the solution of an implicit equation. In \cite{maronna76robust}, the existence and uniqueness of such a solution are proved, under conditions involving both the ratio $c_N:=m/N$ of the population dimension $m$ and the sample size $N$, and the parametrization of the estimate. {This constraint was largely relaxed in \cite{kent1991redescending, zhang2013multivariate}}. With the recent renewed interest in robust $M$-estimation under the random matrix regime $N,m\to\infty$ with $c_N\to c_{\infty}\in(0,1)$ \cite{couillet2013robust,couillet2013therandom,zhang2014marchenko,soloveychik2014nonasymptotic}, {alternative proofs of existence and uniqueness have appeared motivated by this assumption of large $m$.} While Maronna's original results are valid for any (well-behaved) set of samples satisfying the condition on $c_N$, the results in e.g. \cite{couillet2013robust} are expressed in probabilistic terms and are only valid for all large $m,N$. 

Based on the ideas from \cite{pascal2008covariance,chitour2008exact,pascal2014generalized}, the present article {proposes an alternative proof to \cite{kent1991redescending} to show existence and uniqueness for all well-behaved set of samples with a known location parameter and for any  $c_N\in(0,1)$.} {More importantly, by a proper parametrization of the weight function appearing in Maronna's estimator, we prove that some sequences of Maronna's $M$-estimators converge to a unique Tyler's distribution-free $M$-estimator of scatter \cite{tyler1987distribution}. This result is a novel property of the Tyler's $M$-estimators, rigorously proved in this work.} 
{This completes the recent result (Theorem 1 of \cite{ollila2012distribution-free}) stating that the Tyler's $M$-estimator is the Maximum Likelihood estimator (MLE) of the scatter for various complex elliptically symmetric (CES) distributions as well as for the angular central Gaussian (ACG) distributions \cite{ollila2012complex}.}

The paper is organized as follows: Section II presents our main results as well as Monte-Carlo simulations that corroborate our theoretical claims, the proofs of which are provided in Section III. Section IV draws some conclusions and perspectives of this work.

\section{Notations and statement of the results}
Let $\mR_+$ (resp. $\mR_+^*$)  be the (resp. strictly) positive real line. We use $M_m(\mR)$ and ${\rm Sym}_m$ to denote the vector space of $m\times m$ matrices with real entries and the linear subspace of $M_m(\mR)$ made of the symmetric matrices, respectively. We also use ${\rm Sym}_m^+$ and ${\rm PSD}_m$ to denote the non trivial cones in $M_m(\mR)$ of the non negative symmetric matrices and of the symmetric positive definite matrices, respectively. Also, $(\cdot)^T$ stands for the transpose, $\Tr(\cdot)$ and $\det(\cdot)$ for the trace and the determinant. On $M_m(\mR)$, we use the inner product defined by the Frobenius norm $\Vert \mathbf A\Vert=\sqrt{\Tr(\mathbf A\mathbf A^T)}$. We also use $\leq$ to denote the partial order on ${\rm Sym}_m$ and $\Ig_m$ the $m\times m$ identity matrix. 
Functions of two non negative real variables $(t,x)$ will be considered. If $f$ is such a function, we use $f_t$, $f_x$, $f_{tx}$,~$\ldots$ to denote (when defined) the partial derivatives of $f$ with respect to $t$ and/or $x$.

\begin{deff}
 A family $(\yg_i)_{1\leq i\leq N}$ of vectors in $\mR^m$ is {\it admissible} if 
\bed
\item[$(C1)$] for $1\leq i\leq N$, $\Vert \yg_i\Vert =1$;
\item[$(C2)$] {the vectors in any subset of size $m$ of $\{\yg_1,\cdots, \yg_N\}$ are linearly independent} 
\eed
\end{deff}

{This definition straightforwardly implies that if $(\yg_i)_{1\leq i\leq N}$ is an admissible family of vectors in $\mR^m$ and if $m$ vectors (say) $\yg_1,\cdots,\yg_m$ which are then linearly independent by $(C2)$ are fixed, for $m+1\leq l\leq N$, we can write $\yg_l=\sjn \gamma_{lj}\yg_j$. Then, $\gamma_{lj}\neq 0$ for every $1\leq j\leq m$ and $m+1\leq l\leq N$.}\\

{Let us now} consider maps $u:(\mR_+^*)^2\rightarrow\mR_+$ of class $C^1$ satisfying:
\bed
\item[$(U1)$] $u(t,\cdot)$ is strictly decreasing;
\item[$(U2)$] for every $t>0$, $v(t,x):=x\mapsto xu(t,x)$ is increasing on $\mR_+$ and 
$l_t:=\sup_{x\geq 0}v(t,x)>m$;
\eed
We furthermore define, for every $x>0$, $u(0,x)=\frac m{x}$.
Note that, by continuity of $u$, $\forall x>0$, $\lim_{t\rightarrow 0^+}v(t,x)=m$. 
Also, according to $(U1)$ and $(U2)$, for each $t,x>0$, 
\beq\label{eq:v0}
v(t,x)=m+tv_1(x)+tw(t,x),
\eeq
with $v_1(\cdot):=v_t(0,\cdot)$ and $\forall x>0,\lim_{t\rightarrow 0}w(t,x)=0$. By a simple computation, one has that $v_1$ is a nondecreasing function on $\mR_+^*$. 

For further use, we introduce the following additional notation. Let $x_t>0$ be the unique positive number such that, $\forall t>0,~v(t,x_t)=x_tu(t,x_t)=m.$


We further consider the following assumption
\begin{align*}
(U3)\quad
\left\{
\begin{array}{ll}
v_x:=dv/dx>0 \\
v_1\hbox{ is increasing} \\ 
0<\liminf_{t\to 0} x_t\leq \limsup_{t\to 0} x_t<\iy. 
\end{array}
\right.
\end{align*}
If the latter occurs and $u$ is of class $C^2$, then $w(t,x)=tw_1(x)+o(t)$, with $w_1(\cdot):=w_t(0,\cdot)$ continuous on $(\mR_+^*)^2$, the convergence in $(U2)$ is uniform in $x$ on any compact of $\mR_+^*$ and $x_t$ converges to the unique solution $x_0$ of $v_1(x)=0$.

We use $\bar{u}(t,x)$ to denote the particular function 
\beq\label{model}
\bar{u}(t,x)=\frac{m(1+t)}{x+t}
\eeq
which is analytic on every compact of $(\mR_+)^2\setminus \{(0,0)\}$. Moreover, $\bar{l}_t=m(1+t)$, $\bar{v_1}(x)=m(1-\frac 1 {x})$ and $\bar w(t,x)=\frac{-mt}{t+x}$. 

The objective of the work is to study the solutions of the equation given, for all $t>0$, by
\begin{equation*}
({\rm Eq})_t\qquad \Mg=\frac{1}{N}\sum_{i=1}^Nu(t,\ciMv)\cii.
\end{equation*}
and to characterize them in the limit where $t\to 0$.
Taking into account our definitions, if a solution to $({\rm Eq})_t$ exists, it must belong to ${\rm PSD}_m$.

{Remark that the condition M of \cite{kent1991redescending} also imposes a ``strictly'' increasing $v$ which excludes e.g. the Huber $M$-estimator.}



To state our results, we need to consider the set of solutions of the equation $({\rm Eq})_0$ {(that defines the Tyler's $M$-estimator)} given by
\begin{equation*}
({\rm Eq})_0\qquad \Mg=\frac{m}{N}\sum_{i=1}^N\frac1{\ciMv}\cii.
\end{equation*}
Recall from \cite{pascal2008covariance} that the set of solutions of $({\rm Eq})_0$ is the half-line $\mR_+^*\,\Pg$ in ${\rm PSD}_m$, where $\Pg$ is the unique solution of $({\rm Eq})_0$ with $\Tr(\Pg)=m$.\\

Our main result is the following theorem.
\begin{thm}\label{th1}
	Let $(\yg_i)_{1\leq i\leq N}$ be an admissible family of vectors in $\mR^m$ and $u:(\mR_+)^2\setminus \{(0,0)\}\to \mR_+$ be a $C^1$ function verifying $(U1)$--$(U2)$.
Then,
\bed
\item[$(A)$] $\forall t>0$, $({\rm Eq})_t$ admits a unique solution, $\Mg(t)$.
\item[$(B)$] If, furthermore, $u$ is $C^2$ and satisfies $(U3)$, then the mapping $t\mapsto \Mg(t)$ is continuous and $\lim_{t\to 0}\Mg(t)=\Mg_0$ the solution of $({\rm Eq})_0$ given by $\Mg_0=\xi_u \Pg$ with
$\xi_u>0$ unique solution to
\beq\label{eq:xi0}
\siN v_1\left(\frac{\yg_i^T\Pg^{-1}\yg_i}\xi\right)=0.
\eeq
In particular, for $u=\bar{u}$, $\Mg_0=\Pg$, i.e., $\xi_{\bar u}=1$.
\eed
\end{thm}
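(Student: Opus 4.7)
For part (A), the plan is to view $({\rm Eq})_t$ as a fixed-point equation $\Mg=F_t(\Mg)$ for the map $F_t(\Mg):=\frac1N\siN u(t,\ciMv)\cii$, which preserves ${\rm PSD}_m$ thanks to admissibility $(C2)$. For existence I would establish uniform eigenvalue bounds on any candidate solution---the upper bound following from $v(t,\cdot)\leq l_t<\infty$ via the trace identity below, the lower bound from admissibility---and then invoke a Brouwer-type fixed-point argument on the corresponding compact subset of ${\rm PSD}_m$. For uniqueness, I would argue by contradiction: if $\Mg_1\neq\Mg_2$ both solve $({\rm Eq})_t$, set $\lambda_*:=\max\{\lambda>0:\Mg_1\geq\lambda\Mg_2\}$ and pick $v\neq 0$ with $\Mg_1 v=\lambda_*\Mg_2 v$. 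Writing $x_i:=\yg_i^T\Mg_2^{-1}\yg_i$, a direct manipulation combining $\Mg_1^{-1}\leq\lambda_*^{-1}\Mg_2^{-1}$, the strict monotonicity of $u$ in $(U1)$, and the monotonicity of $v$ in $(U2)$ yields
\beq
0=\lambda_*v^T\Mg_2v-v^T\Mg_1v=\frac1N\siN\frac{\lambda_*}{x_i}\bigl[v(t,x_i)-v(t,\lambda_*^{-1}x_i)\bigr](v^T\yg_i)^2,
\eeq
whose sign structure combined with $(C2)$ forces $\Mg_1=\lambda_*\Mg_2$. Swapping $\Mg_1\leftrightarrow\Mg_2$ in a symmetric argument then yields $\lambda_*=1$, whence $\Mg_1=\Mg_2$.

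For part (B), the central identity is obtained by right-multiplying $({\rm Eq})_t$ by $\Mg(t)^{-1}$ and taking the trace:
\beq\label{plan:trace}
\frac1N\siN v(t,\yg_i^T\Mg(t)^{-1}\yg_i)=m.
\eeq
Combined with the expansion $v(t,x)=m+tv_1(x)+tw(t,x)$ of \r{eq:v0}, this reduces to the scalar constraint
\beq\label{plan:reduced}
\frac1N\siN\bigl[v_1(\yg_i^T\Mg(t)^{-1}\yg_i)+w(t,\yg_i^T\Mg(t)^{-1}\yg_i)\bigr]=0,
\eeq
which is the finite-$t$ precursor to \r{eq:xi0}. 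Continuity of $t\mapsto\Mg(t)$ follows from the uniqueness in (A) and a standard compactness argument. To pass to the limit $t\to 0^+$, I would first establish uniform a priori bounds on the spectrum of $\Mg(t)$ near $t=0$, using $(U3)$---in particular $0<\liminf_{t\to 0}x_t\leq\limsup_{t\to 0}x_t<\infty$---to prevent $\Mg(t)$ from escaping to the boundary of ${\rm PSD}_m$. Extract a subsequence $t_n\to 0$ along which $\Mg(t_n)\to\Mg_0\in{\rm PSD}_m$. The uniform convergence $u(t,x)\to m/x$ on compacts of $x>0$ provided by $(U3)$ lets one pass to the limit in $({\rm Eq})_{t_n}$, showing $\Mg_0$ solves $({\rm Eq})_0$; hence $\Mg_0=\xi\Pg$ for some $\xi>0$. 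Passing to the limit in \r{plan:reduced}, using continuity of $v_1$ and $w(t,\cdot)\to 0$ uniformly on compacts, delivers exactly \r{eq:xi0}. Since $v_1$ is strictly increasing under $(U3)$, the map $\xi\mapsto\siN v_1(\yg_i^T\Pg^{-1}\yg_i/\xi)$ is strictly decreasing, so its positive root $\xi_u$ is unique; subsequence independence then promotes the convergence to the full family, giving $\lim_{t\to 0}\Mg(t)=\xi_u\Pg$. The specialization $u=\bar u$ follows from $\bar v_1(x)=m(1-1/x)$: \r{eq:xi0} reduces to $\xi\siN 1/(\yg_i^T\Pg^{-1}\yg_i)=N$, while tracing $({\rm Eq})_0$ for $\Pg$ together with $\|\yg_i\|=1$ and $\Tr(\Pg)=m$ gives $\siN 1/(\yg_i^T\Pg^{-1}\yg_i)=N$, whence $\xi_{\bar u}=1$.

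The main obstacle I anticipate is the uniform-in-$t$ control of the eigenvalues of $\Mg(t)$ near $t=0$: the limiting equation $({\rm Eq})_0$ admits the one-parameter family $\mR_+^*\Pg$ of solutions, so $\Mg(t)$ could a priori drift along this ray under small perturbations of $t$. Hypothesis $(U3)$ is precisely tailored to pin the scaling through the implicit scalar relation \r{plan:reduced}, but turning this heuristic into rigorous uniform bounds requires care. A secondary difficulty is the uniqueness argument in part (A) when $v(t,\cdot)$ is only non-strictly increasing (as the paper permits, unlike the condition M of \cite{kent1991redescending}), where admissibility $(C2)$ must be leveraged in the form ``at most $m-1$ of the $\yg_i$ lie in any proper subspace'' to rule out the flat directions of $v$.
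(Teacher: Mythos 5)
Your architecture for Part (A) is genuinely different from the paper's (which builds a functional $H(t,\Mg)=\prod_{i=1}^Nh(t,\ciMv)^m/(\det\Mg)^N$ whose critical points are exactly the solutions of $({\rm Eq})_t$, gets existence because $H(t,\cdot)$ vanishes on $\partial{\rm PSD}_m$, and gets uniqueness by showing every critical point is a strict local maximum and invoking the mountain pass theorem), and as written it has two genuine gaps. For existence, a priori eigenvalue bounds on solutions do not give Brouwer what it needs, namely a compact convex set mapped into itself by $F_t$: from $a\Ig_m\leq\Mg\leq b\Ig_m$ you only get $F_t(\Mg)\leq u(t,1/b)\,\frac1N\siN\cii$, and $u(t,1/b)$ can be as large as $l_tb>b$, so the slab $\{a\Ig_m\leq\Mg\leq b\Ig_m\}$ is not invariant; some normalized iteration or degree argument would be needed and is not supplied. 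For uniqueness, your display is an inequality, not an equality (the exact identity involves $u(t,\yg_i^T\Mg_1^{-1}\yg_i)$, and only $\Mg_1^{-1}\leq\lambda_*^{-1}\Mg_2^{-1}$ together with $(U1)$ turns it into the one-sided bound featuring $v(t,x_i)-v(t,\lambda_*^{-1}x_i)$); more seriously, the equality case only yields $v(t,x_i)=v(t,\lambda_*^{-1}x_i)$ for the at least $N-m+1$ indices with $v^T\yg_i\neq0$, which forces $\lambda_*=1$ only when $v(t,\cdot)$ is \emph{strictly} increasing. The paper's $(U2)$ deliberately allows a merely nondecreasing $v$ (to cover Huber's estimator, cf.\ the remark contrasting with condition M of \cite{kent1991redescending}), so this is exactly the regime your argument does not close, and invoking $(C2)$ does not obviously repair it.

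For Part (B) your outline coincides with the paper's Lemma 6 (trace identity, expansion \r{eq:v0}, monotonicity of $v_1$ to get uniqueness of $\xi_u$, and the computation $\xi_{\bar u}=1$, all of which are correct), but the step you explicitly leave open --- uniform bounds $a\Ig_m\leq\Mg(t)\leq b\Ig_m$ for small $t$ --- is precisely the crux, and the paper's proof of it (Lemma 5) relies on the variational characterization your fixed-point framework does not provide: since $\Mg(t)$ maximizes $H(t,\cdot)$ and $\Pg$ maximizes $B$, multiplying $H(t,\Pg)\leq H(t,\Mg(t))$ by $B(\Mg(t))\leq B(\Pg)$ yields $\prod_{i=1}^Ng(t,\yg_i^T\Pg^{-1}\yg_i)\leq\prod_{i=1}^Ng(t,\yg_i^T\Mg(t)^{-1}\yg_i)\leq1$ with $g(t,\cdot)\leq1$ peaked at $x_t$, which pins each $\yg_i^T\Mg(t)^{-1}\yg_i$ into a fixed compact subset of $(0,\infty)$ for $t$ small. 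Without a substitute for this comparison, the passage to the limit, the derivation of \r{eq:xi0}, and hence the identification $\Mg_0=\xi_u\Pg$ do not go through.
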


\begin{proof}[Theorem \ref{th1}]
The proof is postponed in the next section.
\end{proof}

\begin{rmk}~

\begin{enumerate}
\item The interest of Theorem \ref{th1}, in addition to providing an alternative proof for the existence and uniqueness, lies in the convergence of all $M$-estimators to a Tyler's $M$-estimator. This limit can be different (by a scale factor) from one $M$-estimator to another. While this result was expected, this paper rigorously proves it.\\

\item Moreover, the theorem provides a way of understanding why the Tyler's estimator is the outmost robust\footnote{Here the robustness has to be understood as the classical property considered in the robust estimation theory literature, see e.g. \cite{hampel1986robust}} $M$-estimator. Indeed, considering a ML approach, the weight function $u(t,x)$ is derived from the observations probability density function (PDF) and in such a case, $t \to 0$ means that the underlying distribution becomes more and more heavy-tailed. For instance, considering $t$ as the exponent parameter of a Generalized Gaussian distribution or of a W-distribution, the smaller the value of $t>0$ is, the heavier-tailed is the distribution. This is also the case for the degree of freedom of a Student-t distribution or the shape parameters of a K-distribution or of a Compound-Gaussian with inverse Gaussian texture (see \cite{ollila2012complex} for more details).  In all these cases, the MLEs satisfy the assumptions of Theorem \ref{th1} (at least for small values of $t$) and should be more robust when the distribution is heavier-tailed. To summarize, this result theoretically motivates the use of the Tyler's estimator, since it will perform similarly as MLEs in heavy-tailed distribution contexts.\\
\end{enumerate}
\end{rmk}

\begin{figure}[!h]
\begin{center}
{\label{err-cv=0.99}\begin{tikzpicture}[font=\footnotesize,scale=.98]
\pgfplotsset{every axis/.append style={mark options=solid, mark size=2.5pt}}
\pgfplotsset{every axis legend/.append style={fill=white,cells={anchor=west},at={(0.02,0.98)},anchor=north west}} \tikzstyle{every axis y label}+=[yshift=-10pt]
\tikzstyle{every axis x label}+=[yshift=5pt]
\tikzstyle{dashed dotted}=[dash pattern=on 1pt off 4pt on 6pt off 4pt]

\begin{axis}[xlabel={$t$},ylabel={$C(t)$}, xmin=0,xmax=1.001,ymin=0,ymax=430]
\addplot[mark=star,smooth,red,line width=.5pt] plot coordinates {
(0.001,0.00888409)(0.051,10.7472)(0.101,25.4708)(0.151,38.6432)(0.201,49.8004)(0.251,59.187)(0.301,67.1252)(0.351,73.898)(0.401,79.7324)(0.451,84.807)(0.501,89.2614)(0.551,93.2052)(0.601,96.7251)(0.651,99.8898)(0.701,102.755)(0.751,105.365)(0.801,107.757)(0.851,109.96)(0.901,111.999)(0.951,113.895)(1.001,115.665)
};
\addplot[mark=diamond,smooth,blue,line width=.5pt] plot coordinates {
(0.001,0.0121789)(0.051,14.5589)(0.101,34.1226)(0.151,51.3669)(0.201,65.8162)(0.251,77.8664)(0.301,87.9791)(0.351,96.5449)(0.401,103.872)(0.451,110.201)(0.501,115.716)(0.551,120.565)(0.601,124.86)(0.651,128.693)(0.701,132.137)(0.751,135.251)(0.801,138.082)(0.851,140.67)(0.901,143.048)(0.951,145.242)(1.001,147.275)
};
\addplot[mark=triangle,smooth,green,line width=.5pt] plot coordinates {
(0.001,0.0289981)(0.051,37.5272)(0.101,90.7745)(0.151,138.733)(0.201,179.089)(0.251,212.568)(0.301,240.341)(0.351,263.489)(0.401,282.897)(0.451,299.273)(0.501,313.173)(0.551,325.039)(0.601,335.22)(0.651,343.997)(0.701,351.596)(0.751,358.201)(0.801,363.962)(0.851,369.005)(0.901,373.432)(0.951,377.332)(1.001,380.778)

};

\legend{{$\rho=0.1$},{$\rho=0.5$},{$\rho=0.9$}};
\end{axis}
\end{tikzpicture}
}

\caption{\label{fig} Convergence of $M(t)$ towards $M_0$ when $t \to 0$ for $N=m+1=51$. The criterion used is the MSE: $C(t)=E\left[\|M(t)-M_0\|_F^2\right]$.}
\end{center}
\end{figure}
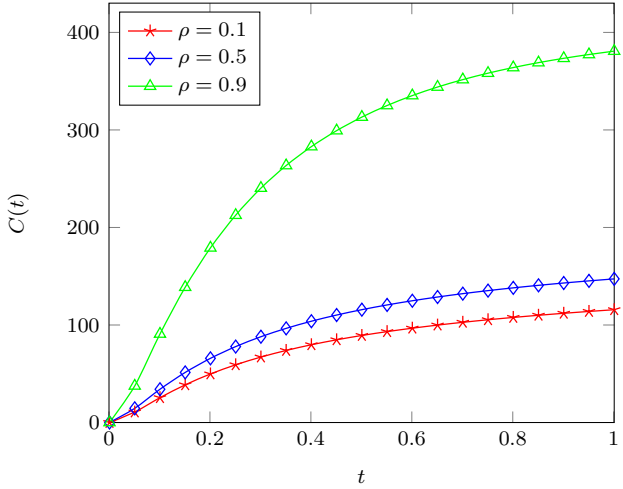

To illustrate Theorem~\ref{th1}, Figure \ref{fig} presents the mean square error $C(t)\triangleq E[\|\Mg(t)-\Mg_0\|_F^2]$ between Tyler's $M$-estimator and the Student-t MLE versus the parameter $t$, called the degree of freedom of the multivariate Student-t distribution \cite{ollila2012complex}, defined through the weight function $u(t,x)=\frac{m+t}{t+x}$. We take here $N=m+1=51$. The data are zero-mean Gaussian distributed with Toeplitz covariance matrix, the $(i, j)$ entry of which is equal to $\rho^{|i-j|}$, for some $\rho \in (0,1)$. As proved in Theorem \ref{th1}, Item (A) is illustrated in the case where $N=m+1$ while Item (B) is illustrated for the Student-t MLE for different population covariance matrices.

\section{Proof of Theorem \ref{th1}}
The strategy of the proof is as follows: for every $t>0$, we first build a positive functional $H(t,\cdot)$ over ${\rm PSD}_m$ whose critical points (if any) are exactly the solutions of $({\rm Eq})_t$. 
To establish the existence of such critical points, we show that $H(t,\cdot)$ is uniformly bounded and tends to zero at the boundary of ${\rm PSD}_m$. 
To obtain uniqueness, we show that solutions of $({\rm Eq})_t$ are all local strict maxima of $H(t,\cdot)$ and conclude by applying the mountain pass theorem (cf. \cite{struwe2008variational}). This gives Item $(A)$. Item $(B)$ is then obtained using the implicit function theorem and some limiting arguments. 

For $t>0$, we define the function
\bea
h:\mR_+^*\times \mR_+&\rightarrow&\mR_+^*\nonumber \\
(t,x)&\mapsto&e^{-\,\frac{1}{m}\int_{x_t}^xu(t,y)dy}.\label{eq:h0}
\eea
Then $-\frac{h_x}h=\frac{u}{m}$ and $h(t,x_t)=1$. Set $h(0,x)=\frac1{x}$ for $x>0$
and $g:\mR_+^*\times \mR_+\rightarrow\mR_+^*$ with
$g(t,x)=xh(t,x)$. 

In the case where $u=\bar{u}$, $\forall (t,x)\in (\mR_+)^2\setminus \{(0,0)\}$, 
\begin{align*}
	x_t\equiv 1,~ \bar{h}(t,x)=\Big(\frac{1+t}{x+t}\Big)^{1+t},~
 \bar{g}(t,x)=x\Big(\frac{1+t}{x+t}\Big)^{1+t}.
\end{align*}

Then, define the functional $H(t,\cdot)$ as
\begin{align}
H:\mR_+^*\times {\rm PSD}_n&\rightarrow\mR_+^*\nonumber \\
(t,\Mg)&\mapsto\frac{\prod_{i=1}^N h(t,\ciMv)^m}{(\det \Mg)^N}\label{eq:H}
\end{align}
as well as the functional considered in \cite{pascal2008covariance}
\begin{align}
B:{\rm PSD}_n&\rightarrow\mR_+^*\nonumber \\
\Mg&\mapsto\frac{\prod_{i=1}^N h(0,\ciMv)^m}{(\det \Mg)^N}\label{eq:H0}.
\end{align}

\begin{lemma}\label{lem1}
For $t>0$ and $\Mg\in {\rm PSD}_m$, one has
$-\Mg H_x(t,\Mg)\Mg/NH(t,\Mg)=\Mg-\frac{1}N\siN u(t,\ciMv)$,
with $H_x(t,\Mg)$ the gradient of $H(t,\cdot)$.
In particular, $\Mg$ is a solution of $({\rm Eq})_t$ if and only if $\Mg$ is a critical point of $H(t,\cdot)$.
\end{lemma}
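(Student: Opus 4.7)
The plan is a direct computation via the chain rule, exploiting the fact that $H$ has been designed precisely so that its logarithmic gradient is aligned with the left-hand side of $(\mathrm{Eq})_t$. The only non-routine ingredient is the identity $-h_x/h = u/m$, which follows at once from the definition \eqref{eq:h0} of $h$.

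First I would take logarithms to turn the product/quotient structure of $H$ into a sum:
\[
\log H(t,\Mg) \;=\; m\sum_{i=1}^{N}\log h\bigl(t,\ciMv\bigr)\;-\;N\log\det\Mg.
\]
Then I would differentiate in $\Mg$ using the two standard matrix-calculus identities on $\mathrm{Sym}_m$:
\[
\nabla_{\Mg}\log\det\Mg=\Mg^{-1},\qquad \nabla_{\Mg}(\yg_i^T\Mg^{-1}\yg_i)=-\Mg^{-1}\yg_i\yg_i^T\Mg^{-1}.
\]
Combining these with the defining relation $h_x/h=-u/m$ of $h$ yields
\[
\nabla_{\Mg}\log h\bigl(t,\ciMv\bigr)\;=\;\frac{u(t,\ciMv)}{m}\,\Mg^{-1}\yg_i\yg_i^T\Mg^{-1},
\]
so that
\[
\frac{H_x(t,\Mg)}{H(t,\Mg)}\;=\;\sum_{i=1}^{N}u(t,\ciMv)\,\Mg^{-1}\yg_i\yg_i^T\Mg^{-1}\;-\;N\Mg^{-1}.
\]

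Next, I would conjugate by $\Mg$ on both sides (multiplying by $-\Mg/N$ on the left and $\Mg$ on the right) to cancel the $\Mg^{-1}$ factors on the right-hand side; this immediately gives
\[
-\frac{\Mg H_x(t,\Mg)\Mg}{N\,H(t,\Mg)}\;=\;\Mg\;-\;\frac{1}{N}\sum_{i=1}^{N}u(t,\ciMv)\,\yg_i\yg_i^T,
\]
which is the claimed identity. The equivalence between solutions of $(\mathrm{Eq})_t$ and critical points of $H(t,\cdot)$ then follows because $\Mg$ and $H(t,\Mg)$ are strictly positive (the former as an element of $\mathrm{PSD}_m$, the latter by construction), so $H_x(t,\Mg)=0$ iff the right-hand side vanishes.

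There is no real obstacle; the only care needed is a brief justification that the derivatives above are taken on the open cone $\mathrm{PSD}_m$ (where $\Mg^{-1}$ is well defined and $H$ is smooth), and a check that the gradient is evaluated in the symmetric sense consistent with the Frobenius inner product. Everything else reduces to the chain rule applied to the explicit form of $H$ in \eqref{eq:H}.
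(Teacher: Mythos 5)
Your computation is correct and is precisely the routine chain-rule argument the paper relies on: the paper in fact states Lemma~\ref{lem1} without proof, and your derivation (logarithmic differentiation of $H$, the identities $\nabla_{\Mg}\log\det\Mg=\Mg^{-1}$ and $\nabla_{\Mg}(\yg_i^T\Mg^{-1}\yg_i)=-\Mg^{-1}\yg_i\yg_i^T\Mg^{-1}$, and the design relation $-h_x/h=u/m$) supplies exactly the omitted verification, including the correct justification of the equivalence via invertibility of $\Mg$ and positivity of $H$.
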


\begin{lemma}\label{lem2}
$\forall t>0,\Mg\in {\rm PSD}_m$, $H(t,\Mg)\leq B(\Mg)$. As a consequence, $\lim_{\Mg\to \partial {\rm PSD}_m}H(t,\Mg)=0$, so that $H(t,\cdot)$ admits critical points. 
\end{lemma}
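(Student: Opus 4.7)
The plan is to derive the pointwise inequality $H(t,\Mg)\le B(\Mg)$ from a one-dimensional estimate on the scalar factor $h(t,\cdot)$, and then to transport the boundary decay of Tyler's functional $B$ recalled from \cite{pascal2008covariance} to $H(t,\cdot)$ in order to force an interior maximum, which by Lemma~\ref{lem1} is a critical point solving $({\rm Eq})_t$.

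The main step is the analysis of the auxiliary scalar map $g(t,x)=xh(t,x)$ at fixed $t>0$. Since $h_x/h=-u/m$ by definition, one computes
\begin{equation*}
g_x(t,x)=h(t,x)\bigl(1-xu(t,x)/m\bigr)=h(t,x)\bigl(1-v(t,x)/m\bigr).
\end{equation*}
By $(U2)$ the map $v(t,\cdot)$ is increasing and by definition $v(t,x_t)=m$, so $g_x(t,\cdot)$ is positive on $(0,x_t)$ and negative on $(x_t,\infty)$. Hence $g(t,\cdot)$ attains its global maximum at $x_t$ with value $g(t,x_t)=x_t\,h(t,x_t)=x_t$, using $h(t,x_t)=1$. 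In the reference case $u=\bar u$ one has $x_t\equiv 1$, so $g(t,x)\le 1$ and therefore $h(t,x)\le 1/x=h(0,x)$ pointwise; for a general $u$ satisfying $(U1)$--$(U2)$ the same argument gives $h(t,x)\le x_t\,h(0,x)$, so that upon raising to the $m$-th power, multiplying over $i$, and dividing by $(\det\Mg)^N$ one obtains
\begin{equation*}
H(t,\Mg)\le x_t^{mN}\,B(\Mg),\qquad \Mg\in\mathrm{PSD}_m,
\end{equation*}
which reduces to the stated $H(t,\Mg)\le B(\Mg)$ whenever $x_t\le 1$ and in the general case differs only by the $\Mg$-independent factor $x_t^{mN}$ that does not affect the boundary argument.

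For the conclusion, I would invoke the Tyler-type estimate from \cite{pascal2008covariance}: for an admissible family, $B$ is bounded on $\mathrm{PSD}_m$ and $B(\Mg)\to 0$ as $\Mg\to\partial\mathrm{PSD}_m$, condition $(C2)$ being exactly what prevents $(\det\Mg)^N$ from beating the blow-up of some $\ciMv$ when an eigenvalue of $\Mg$ collapses. The pointwise bound just established then transfers this decay to $H(t,\cdot)$. Since $H(t,\cdot)$ is continuous on the open cone $\mathrm{PSD}_m$, strictly positive there (e.g. $H(t,\Ig_m)>0$), bounded above by $x_t^{mN}\sup B<\infty$, and vanishing on $\partial\mathrm{PSD}_m$, its supremum is attained at some interior $\Mg(t)\in\mathrm{PSD}_m$; this interior maximiser is necessarily a critical point, and by Lemma~\ref{lem1} it solves $({\rm Eq})_t$. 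The only really delicate ingredient is the monotonicity analysis of $g(t,\cdot)$ localising its maximiser at $x_t$; once that pointwise comparison with $h(0,\cdot)$ is in hand, the remaining steps amount to importing the Tyler boundary estimate and applying a standard compactness argument on the open cone $\mathrm{PSD}_m$.
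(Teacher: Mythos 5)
Your pointwise comparison of $h(t,\cdot)$ with $h(0,\cdot)$ is correct, and in fact more careful than the paper's own wording: the maximum of $g(t,\cdot)$ is $g(t,x_t)=x_t$, which need not equal $1$ (for the Student-t weight $u(t,x)=\frac{m+t}{t+x}$ one finds $x_t=m$), so the inequality really should carry the harmless $\Mg$-independent constant $x_t^{mN}$, exactly as you note. The genuine gap is in the second half. The functional $B$ is homogeneous of degree zero: since $h(0,x)=1/x$, one checks $B(\lambda\Mg)=B(\Mg)$ for every $\lambda>0$. It is therefore \emph{false} that $B(\Mg)\to 0$ as $\Mg\to\partial{\rm PSD}_m$ in the sense needed here: along the ray $\Mg=\rho\,\Ig_m$ with $\rho\to 0$ (which converges to $0\in\partial{\rm PSD}_m$) or $\rho\to\infty$, $B$ is constant equal to $B(\Ig_m)>0$. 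The bound $H(t,\cdot)\le x_t^{mN}B(\cdot)$ consequently cannot by itself transfer any decay to $H(t,\cdot)$ in these scaling directions, and your compactness argument does not close: a maximizing sequence could a priori escape to $0$ or to infinity with nondegenerate shape. The estimate of \cite{pascal2008covariance} only gives $B(\Ng_k)\to 0$ for \emph{normalized} matrices whose limit is singular.

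The paper closes exactly this hole by writing $\Mg_k=\rho_k\Ng_k$ with $\rho_k=\Vert\Mg_k\Vert$ and splitting into two cases: (i) the shape $\Ng_k$ degenerates to a singular matrix, where $H(t,\Mg_k)\le B(\Ng_k)\to 0$ by the Tyler estimate applied to $\Ng_k$; and (ii) $\Ng_k\ge\alpha\Ig_m$ while $\rho_k\to 0$ or $\infty$, where one uses the exact factorization
\begin{equation*}
H(t,\Mg_k)=B(\Ng_k)\prod_{i=1}^N g(t,x_{i,k})^m,\qquad x_{i,k}=\yg_i^T\Ng_k^{-1}\yg_i/\rho_k,
\end{equation*}
together with $\lim_{x\to 0}g(t,x)=\lim_{x\to\infty}g(t,x)=0$. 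The limit at infinity is precisely where the hypothesis $l_t=\sup_x v(t,x)>m$ from $(U2)$ enters, via $\ln g(t,x)\sim\frac{m-l_t}{m}\ln x$; your proposal never invokes $l_t>m$, which is a symptom of the missing case. Once case (ii) is supplied, your interior-maximum conclusion and the appeal to Lemma~\ref{lem1} are fine.
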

\proof[Lemma \ref{lem2}] An immediate calculus yields that $x\mapsto g(t,x)$ reaches its maximum $1$ at $x=x_t$. As a consequence, for $t>0,\Mg\in {\rm PSD}_m$, $H(t,\Mg)\leq B(\Mg)$. Moreover, 
$\lim_{x\rightarrow 0,\iy}g(t,x)=\lim_{x\rightarrow 0,\iy}xh(t,x)=0$.
For the limit at $x=0$, this is obvious. For $x\to\infty$, note that $\ln(g(t,x))=\frac{1}{m}\int_{x_t}^x\frac{m-yu(t,y)}y dy$ and, since $m-l_t<0$, it is equivalent to $(m-l_t)\ln(x)$ as $x\to\infty$.
Consider now a sequence $(\Mg_k)_{k\geq 0}$ in ${\rm PSD}_m$ converging to $\partial {\rm PSD}_m$. For $k\geq 0$, set $\Mg_k=\rho_k\Ng_k$ with
$\rho_k=\Vert \Mg_k\Vert$ and $\Ng_k=\frac{\Mg_k}{\rho_k}$. 
Note that  $\partial {\rm PSD}_m$ is made of matrices either non invertible or with norm going to infinity. Therefore, up to subsequences, either $(i)$ $(\Ng_k)_{k\geq 0}$
converges itself to  $\partial {\rm PSD}_m$ or $(b)$ the sequence $(\rho_k)_{k\geq 0}$ converges to zero or infinity and there exists $\exists \alpha>0,\forall k\geq 0,\Ng_k\geq \al \Ig_m$.
If Case $(i)$ occurs, then $\forall k\geq 0, H(t,\Mg_k)\leq B(\Ng_k)$, which tends to zero as $k\to\infty$ (cf. \cite{pascal2008covariance}). In Case $(ii)$,
\begin{align*}
	H(t,\Mg_k)=\frac{\prod_{i=1}^N h\left(t,x_{i,k}\right)^m}{\rho_k^N\det(\Ng_k)^N}=B(\Ng_k)\prod_{i=1}^N g(t,x_{i,k})^m
\end{align*}
where $x_{i,k}=\yg_i^T\Ng_k^{-1}\yg_i/\rho_k$. As $k\to\infty$, $x_{i,k}$ tends either to zero or infinity and we conclude. 
For $ t>0$, $H(t,\cdot)$ is uniformly bounded over ${\rm PSD}_m$ since $B(\cdot)$ is. So $H(t,\cdot)$ has a global maximum which must belong to ${\rm PSD}_m$ since $H(t,\Mg)\to 0$ as $\Mg$ tends to the boundary of ${\rm PSD}_m$. So $H(t,\cdot)$ admits critical points. 
\EOP

\begin{lemma}\label{lem3}
Let $t>0$. Then all critical points of $H(t,\cdot)$ are local strict maxima.
\end{lemma}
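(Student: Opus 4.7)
The plan is to verify that at any critical point $\Mg$ of $H(t,\cdot)$, the Hessian of $\log H(t,\cdot)$ (as a quadratic form on symmetric perturbations) is negative definite; since $H$ is smooth and positive on ${\rm PSD}_m$, this yields that $\Mg$ is a strict local maximum.

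First I would compute the first two derivatives of $\log H(t, \Mg+s\Ng)$ at $s=0$ for a symmetric $\Ng$. Writing $\log H(t,\Mg) = -\sum_i F(\phi_i) - N\log\det\Mg$ with $F(x)=\int_{x_t}^x u(t,y)\,dy$ (so $F'=u$, $F''=u_x$) and $\phi_i = \yg_i^T\Mg^{-1}\yg_i$, standard matrix calculus gives $\phi_i'|_0 = -\yg_i^T\Mg^{-1}\Ng\Mg^{-1}\yg_i$, $\phi_i''|_0 = 2\yg_i^T\Mg^{-1}\Ng\Mg^{-1}\Ng\Mg^{-1}\yg_i$ and $\frac{d^2}{ds^2}\log\det(\Mg+s\Ng)|_0 = -\Tr((\Mg^{-1}\Ng)^2)$. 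Introducing the symmetric matrix $\mathbf K := \Mg^{-1/2}\Ng\Mg^{-1/2}$ (in bijection with $\Ng$) and the vectors $\zg_i := \Mg^{-1/2}\yg_i$, these become $\phi_i = \|\zg_i\|^2$, $\phi_i'|_0 = -\zg_i^T\mathbf K\zg_i$, $\phi_i''|_0 = 2\|\mathbf K\zg_i\|^2$ and $\Tr((\Mg^{-1}\Ng)^2) = \|\mathbf K\|^2$. The critical-point equation $({\rm Eq})_t$ conjugated by $\Mg^{-1/2}$ reads $\Ig_m = \frac{1}{N}\sum_i u(t,\phi_i)\zg_i\zg_i^T$; taking the trace against $\mathbf K^2$ produces the key identity $N\|\mathbf K\|^2 = \sum_i u(t,\phi_i)\|\mathbf K\zg_i\|^2$. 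Substitution yields
\[
\frac{d^2}{ds^2}\log H(t,\Mg+s\Ng)\Big|_{s=0} = -\sum_i u_x(t,\phi_i)(\zg_i^T\mathbf K\zg_i)^2 - \sum_i u(t,\phi_i)\|\mathbf K\zg_i\|^2.
\]
Each summand is nonpositive: from $(U2)$, $v_x = u + xu_x \geq 0$ gives $-u_x(t,\phi_i) \leq u(t,\phi_i)/\phi_i$, and by Cauchy-Schwarz $(\zg_i^T\mathbf K\zg_i)^2 \leq \phi_i\|\mathbf K\zg_i\|^2$; multiplying produces $-u_x(\zg_i^T\mathbf K\zg_i)^2 \leq u\|\mathbf K\zg_i\|^2$.

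For strict negativity, I would analyze the equality case. The $i$-th bracket vanishes iff either $\mathbf K\zg_i = 0$, or simultaneously $v_x(t,\phi_i)=0$ and $\mathbf K\zg_i \parallel \zg_i$; in either case $\zg_i$ is an eigenvector of the symmetric $\mathbf K$. Letting $E_1,\ldots,E_k$ be the distinct eigenspaces of $\mathbf K$ with $d_j=\dim E_j$ and $n_j=\#\{i:\zg_i\in E_j\}$, admissibility $(C2)$ combined with $N>m$ forces $\mathbf K = \mu\Ig_m$: if some $d_j<m$ (non-trivial decomposition), then $\sum n_j = N > m = \sum d_j$ produces some $n_j \geq d_j+1$, and selecting those $d_j+1$ vectors together with any $m-d_j-1$ others (possible since $N-(d_j+1)\geq m-(d_j+1)$) builds an $m$-subset containing $d_j+1$ linearly dependent vectors in the $d_j$-dimensional space $E_j$, contradicting $(C2)$. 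If $\mu = 0$ we conclude $\mathbf K = 0$, hence $\Ng = 0$; if $\mu \neq 0$, the perturbation lies along the scaling ray $\Ng = \mu\Mg$, which I would handle by the one-variable analysis of $c\mapsto \log H(t,c\Mg)$: its derivative $\frac{1}{c}\left[\sum_i v(t,\phi_i/c) - Nm\right]$ vanishes at $c=1$ by the trace of $({\rm Eq})_t$, and the monotonicity of $v$ together with the non-triviality forced by $\sup v > m$ and the uniqueness of $x_t$ from $(U2)$ yields a strict sign change across $c=1$, so $c=1$ is a strict maximum along this ray. The main obstacle of the proof is precisely this last ``rescaling'' case, where the pointwise second-order Hessian can be degenerate and only the global structure of $v$ recovers strict maximality.
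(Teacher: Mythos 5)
Your main computation coincides with the paper's: the quadratic form you obtain for $\tfrac{d^2}{ds^2}\log H(t,\Mg+s\Ng)\big|_{s=0}$ is exactly the paper's expression \eqref{eq:int0} after inserting the critical-point identity (the paper writes it as $\frac1N\sum_{i}\Vert \Rg\dg_i\Vert^2\bigl[u+\Vert\dg_i\Vert^2u_x r_i\bigr]$ with $r_i\in[0,1]$), and the Cauchy--Schwarz $+$ ``$v_x\ge 0$'' bound for term-by-term nonpositivity is the same. Your eigenspace/pigeonhole analysis of the equality case, pinning the possible kernel of the Hessian down to $\mathbf K=\mu\Ig_m$ via $(C2)$ and $N>m$, is a genuine addition: the paper does not do this, and instead simply asserts that there exists $\bar i\in I_\Qg$ with $v_x(t,\Vert\dg_{\bar i}\Vert^2)>0$, which immediately gives strict negativity.

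The gap is in your final ``rescaling'' step. First, note that your own characterization shows the case $\mathbf K=\mu\Ig_m$, $\mu\neq 0$, is degenerate only if $v_x(t,\phi_i)=0$ for \emph{every} $i$ (indeed, for $\mathbf K=\mu\Ig_m$ the form equals $-\mu^2\sum_i\phi_i\,v_x(t,\phi_i)$); so whenever $v_x>0$ (as in $(U3)$, or under a strict reading of $(U2)$) this case is vacuous and the Hessian is outright negative definite --- no ray analysis is needed. When the case is not vacuous, your patch does not close it for two reasons. (i) A negative \emph{semi}definite Hessian together with strict maximality of $c\mapsto\log H(t,c\Mg)$ at $c=1$ does not imply that $\Mg$ is a strict local maximum of $H(t,\cdot)$: with a degenerate direction, the second-order test is inconclusive and higher-order behavior off the ray can produce larger values (the standard $f(x,y)=-x^2-y^4+10xy^2$ phenomenon). (ii) The claimed strict sign change of $\frac{d}{dc}\log H(t,c\Mg)=\frac1c\bigl[\sum_i v(t,\phi_i/c)-Nm\bigr]$ at $c=1$ does not follow from $\sup_x v(t,x)>m$ and the uniqueness of $x_t$: if $v$ is merely nondecreasing and locally constant near each $\phi_i$ (which is consistent with $v_x(t,\phi_i)=0$ for all $i$), the bracket can vanish on a whole neighborhood of $c=1$. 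So in the only situation where your extra case analysis is actually needed, the argument does not go through; the clean way out is to observe (as the paper implicitly does) that strict monotonicity of $v$ at some $\phi_{\bar i}$ with $\Rg\dg_{\bar i}\neq 0$ already forces strict negativity of the Hessian.
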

\proof[Lemma \ref{lem3}] We show that, if $\Mg$ is a critical point then the Hessian of $H(t,\cdot)$ at $\Mg$ is a negative definite quadratic form implying that $\Mg$ is a local strict maximum of $H(t,\cdot)$.
Let $\Mg\in {\rm PSD}_m$ be a critical point of $H(t,\cdot)$. Then, one gets that for every $\Qg\in {\rm Sym}_m$,
\begin{align*}
&\langle \Qg,{\rm Hess}_\Mg(\Qg)\rangle=
-NH(t,\Mg)\Big[\langle \Qg,\Mg^{-1}\Qg\Mg^{-1}\rangle \\
&~ ~ ~ ~ ~ ~ ~ ~ ~ +\frac{1}N\siN u_x(t,\ciMv)(\yg_i^T\Mg^{-1}\Qg\Mg^{-1}\yg_i)^2\Big].
\end{align*}
Let $\Rg:=\Mg^{-1/2}\Qg\Mg^{-1/2}$ and $\dg_i:=\Mg^{-1/2}\yg_i$, one has
\begin{align}\label{eq:int0}
&-\frac{\langle \Qg,{\rm Hess}_\Mg(\Qg)\rangle}{NH(t,\Mg)}= \nonumber \\
&~ ~ ~ ~ ~ ~ ~ ~ ~ ~ ~ ~ ~ ~ \Vert \Rg\Vert^2+\frac{1}N\siN u_x(t,\Vert \dg_i\Vert^2)(\dg_i^T\Rg\dg_i)^2.
\end{align}
Recall that $\Mg$ is a critical point of $H(t,\cdot)$ and thus a solution of $({\rm Eq})_t$, i.e.,
\begin{equation}\label{eq:id0}
	\Ig_m=\frac{1}N\siN u(t,\Vert \dg_i\Vert^2)\dii.
\end{equation}
Multiplying \eqref{eq:id0} by $\Rg$ on both left and right, taking the trace and plugging the result into \eqref{eq:int0} gives
\begin{align*}
	&\eqref{eq:int0}=\frac{1}N\siN u(t,\Vert \dg_i\Vert^2)\Vert \Rg\dg_i\Vert^2+u_x(t,\Vert \dg_i\Vert^2)(\dg_i^T\Rg\dg_i)^2.
\end{align*}
Let $I_\Qg=\{i\in \{1,\cdots,N\}, \Rg\dg_i\neq 0\}$. Then
\begin{align*}
	&\eqref{eq:int0}=\frac{1}N\sum_{i\in I_\Qg}\Vert \Rg\dg_i\Vert^2\left[u(t,\Vert \dg_i\Vert^2)+\Vert \dg_i\Vert^2u_x(t,\Vert \dg_i\Vert^2)r_i\right]
\end{align*}
where $r_i:=(\dg_i^T\Rg\dg_i /[{\Vert \dg_i\Vert\Vert \Rg\dg_i\Vert}])^2$. Using $0\leq r_i\leq 1$ (by Cauchy-Schwarz's inequality) and $u_x\leq 0$ (since $u$ is of class $C^1$ and verifies $(U1)$), we have $r_iu_x(\cdot,\cdot)\geq u_x(\cdot,\cdot)$. Then, recalling that $v(t,x)=xu(t,x)$,
\begin{align*}
	&\eqref{eq:int0} \geq \frac{1}N\sum_{i\in I_\Qg}\Vert \Rg\dg_i\Vert^2v_x(t,\Vert \dg_i\Vert^2)\geq 0.
\end{align*}
Moreover, if $\Qg\neq 0$, $I_\Qg\neq \emptyset$ and there exists $\bar{i}$ such that $v_x(t,\Vert \dg_{\bar{i}}\Vert^2)>0$. Therefore $\langle \Qg,{\rm Hess}_\Mg(\Qg)<0$, i.e., ${\rm Hess}_\Mg$ is negative definite, concluding the proof.
\EOP
\begin{lemma}\label{lem4}
	Let $t>0$. Then $({\rm Eq})_t$ admits a unique solution, $\Mg(t)$, the unique strict maximum of $H(t,\cdot)$. 
\end{lemma}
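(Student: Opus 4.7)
The plan is to conclude uniqueness via a mountain pass argument. Suppose, for contradiction, that $({\rm Eq})_t$ admits two distinct solutions $\Mg_1 \neq \Mg_2$ in ${\rm PSD}_m$. By Lemma~\ref{lem1} both are critical points of $H(t,\cdot)$, by Lemma~\ref{lem3} both are local strict maxima, and Lemma~\ref{lem2} gives $H(t,\Mg_i) > 0$ for $i=1,2$. Set $f := -H(t,\cdot)$ on the open cone ${\rm PSD}_m$, which is a convex open subset of the finite-dimensional inner product space ${\rm Sym}_m$. The critical points of $f$ coincide with those of $H(t,\cdot)$ and are local strict minima of $f$ with strictly negative values. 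Let $\Gamma := \{\gamma \in C([0,1], {\rm PSD}_m) : \gamma(0)=\Mg_1,~\gamma(1)=\Mg_2\}$ and define the mountain pass value
$$c^* := \inf_{\gamma\in\Gamma}\max_{s\in[0,1]} f(\gamma(s)).$$

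First I would establish $\max(f(\Mg_1), f(\Mg_2)) < c^* < 0$. The upper bound $c^* < 0$ follows by considering the straight-line path $\gamma(s) = (1-s)\Mg_1 + s\Mg_2$, which stays in ${\rm PSD}_m$ by convexity and is compact, so $\max_s f(\gamma(s)) < 0$ since $f < 0$ on ${\rm PSD}_m$. The lower bound uses that $\Mg_i$ is a local strict minimum of $f$: there exist a neighborhood $U_i$ of $\Mg_i$ and $\eps_i > 0$ with $f \geq f(\Mg_i) + \eps_i$ on $\partial U_i$. Every $\gamma\in\Gamma$ must cross $\partial U_1$, so $\max_s f(\gamma(s)) \geq f(\Mg_1) + \eps_1$, and similarly for $\Mg_2$, yielding the strict inequality after taking the infimum.

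Next I would verify a Palais--Smale type condition: any sequence $(\Mg_k)\subset {\rm PSD}_m$ with $f(\Mg_k) \to \alpha$ for some $\alpha$ in the mountain pass range (in particular $\alpha < 0$) and $\nabla f(\Mg_k) \to 0$ satisfies $H(t,\Mg_k) \to -\alpha > 0$. The case analysis at the boundary of ${\rm PSD}_m$ and at infinity performed in the proof of Lemma~\ref{lem2} (showing $H(t,\cdot)\to 0$ in both regimes) then forces $(\Mg_k)$ to remain in a compact subset of ${\rm PSD}_m$, from which a subsequence converging in ${\rm PSD}_m$ follows by finite-dimensionality.

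Finally, I would invoke the refined mountain pass theorem (see \cite{struwe2008variational}), which guarantees that $c^*$ is a critical value of $f$, attained at some critical point $\Mg^* \in {\rm PSD}_m$ that is \emph{not} a local strict minimum of $f$, that is, not a local strict maximum of $H(t,\cdot)$. This contradicts Lemma~\ref{lem3}. Combined with the existence of a critical point furnished by Lemma~\ref{lem2} (a global maximum of $H(t,\cdot)$), this shows that $H(t,\cdot)$ has exactly one critical point, namely the unique solution $\Mg(t)$ of $({\rm Eq})_t$, which is in particular its unique strict maximum. The main obstacle, in my view, is ensuring that the mountain pass machinery applies cleanly on the open cone ${\rm PSD}_m$ rather than on a Banach space; the finite-dimensionality of ${\rm Sym}_m$ together with the boundary/infinity decay of $H(t,\cdot)$ from Lemma~\ref{lem2} is what makes the Palais--Smale condition transparent.
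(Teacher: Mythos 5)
Your proof is correct and follows essentially the same route as the paper: a mountain-pass contradiction with Lemma~\ref{lem3}, the only cosmetic difference being that you work with $-H(t,\cdot)$ (keeping levels strictly below $0$ to force Palais--Smale compactness via the boundary decay of Lemma~\ref{lem2}) whereas the paper applies the theorem to $1/H(t,\cdot)$, which blows up at $\partial{\rm PSD}_m$. Your version is in fact more detailed than the paper's one-line argument, spelling out the mountain-pass geometry and the compactness verification that the paper leaves implicit.
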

\proof[Lemma \ref{lem4}] 
We reason by contradiction assuming $H(t,\cdot)$ admits at least two local strict maxima. Applying the mountain-pass theorem \cite{struwe2008variational} to the functional $1/H(t,\cdot)$ which tends to infinity in the vicinity of $\partial{\rm PSD}_m$, we obtain the existence of a saddle point of $\Fg$ in ${\rm PSD}_m$ which is contradictory to Lemma~\ref{lem3}.
\EOP\\

We next prove that $\Mg(t)$ is uniformly bounded in ${\rm PSD}_m$ as $t\to 0$, i.e.
\begin{lemma}\label{lem5}
There exists $0<a\leq b$ and $t_0>0$ such that, for every $t\in (0,t_0)$, $aI_m\leq \Mg(t)\leq b\Ig_m$. 
\end{lemma}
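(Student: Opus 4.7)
The plan is to argue by contradiction. Suppose there exists a sequence $t_k \to 0^+$ along which $\Mg(t_k)$ escapes every compact of ${\rm PSD}_m$. Writing $\Mg(t_k) = \rho_k \Ng_k$ with $\rho_k := \Vert \Mg(t_k)\Vert$ and $\Vert \Ng_k \Vert = 1$, pass to subsequences so that $\Ng_k \to \Ng_* \in \overline{{\rm PSD}_m}$ with $\Vert \Ng_*\Vert = 1$ and $\rho_k \to \rho_* \in [0,+\infty]$. The escape hypothesis forces either $\Ng_*$ to be singular, or $\Ng_* \in {\rm PSD}_m$ with $\rho_* \in \{0,+\infty\}$. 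I will derive a contradiction in each case.

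The first step is a uniform lower bound on $H(t_k, \Mg(t_k))$. By Lemma~\ref{lem4}, $\Mg(t)$ is the global maximum of $H(t,\cdot)$, so $H(t, \Mg(t)) \geq H(t, \Pg)$. Under $(U3)$ and the $C^2$ regularity of $u$, $x_t \to x_0 > 0$ and $h(t,\cdot)$ converges locally uniformly on compacts of $\mR_+^*$ as $t \to 0^+$; thus $H(t, \Pg)$ converges to a positive constant, so $H(t_k, \Mg(t_k)) \geq c > 0$ for $k$ large.

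The second step rules out the singular alternative via the identity
\[
H(t, \Mg) = B(\Mg) \prod_{i=1}^N g(t, a_i(t))^m, \qquad a_i(t) := \yg_i^T \Mg^{-1} \yg_i,
\]
which follows from $h(t,a) = g(t,a)/a$ and the definition of $B$ (and is already used in the proof of Lemma~\ref{lem2}). Combining $g \leq 1$ with the scale-invariance of $B$ yields $c \leq H(t_k, \Mg(t_k)) \leq B(\Mg(t_k)) = B(\Ng_k)$. Since $B$ vanishes on $\partial {\rm PSD}_m$ for an admissible family (\cite{pascal2008covariance}, cf.~the proof of Lemma~\ref{lem2}), $\Ng_*$ must be positive definite.

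The third and decisive step rules out $\rho_* \in \{0,+\infty\}$ using the trace identity obtained by multiplying $({\rm Eq})_t$ by $\Mg(t)^{-1}$ and taking the trace:
\[
\sum_{i=1}^N v(t, a_i(t)) = Nm.
\]
By $(U3)$, $v(t,\cdot)$ is strictly increasing with $v(t, x_t) = m$ and $x_{t_k} \to x_0 \in (0,\infty)$. Since $\Ng_* \in {\rm PSD}_m$, the numbers $b_i := \yg_i^T \Ng_*^{-1} \yg_i$ are positive and finite, and $a_i(t_k) \sim b_i/\rho_k$. If $\rho_k \to +\infty$, then $a_i(t_k) \to 0$ and eventually $a_i(t_k) < x_{t_k}$ for every $i$, so strict monotonicity gives $\sum_i v(t_k, a_i(t_k)) < Nm$, contradicting the trace identity; symmetrically, $\rho_k \to 0$ forces $a_i(t_k) > x_{t_k}$ eventually and $\sum_i v(t_k, a_i(t_k)) > Nm$, another contradiction. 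Hence $\rho_* \in (0,\infty)$, so $\Mg(t_k) \to \rho_* \Ng_* \in {\rm PSD}_m$ stays in a compact, contradicting the escape assumption. The main technical point is the interplay between steps two and three: one needs the scale-invariance of $B$ to separate the ``direction'' and ``scale'' degrees of freedom of $\Mg(t_k)$, and then the monotonicity of $v(t,\cdot)$ anchored at $x_t$ to pin down the scale.
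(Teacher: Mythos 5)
Your proof is correct, and while it opens the same way as the paper's, its decisive step is genuinely different. Both arguments start by comparing $\Mg(t)$ with the normalized Tyler maximizer $\Pg$. The paper multiplies the two inequalities $H(t,\Pg)\leq H(t,\Mg(t))$ and $B(\Mg(t))\leq B(\Pg)$ to get $\prod_i g(t,\yg_i^T\Pg^{-1}\yg_i)\leq \prod_i g(t,\yg_i^T\Mg(t)^{-1}\yg_i)\leq 1$, observes that the left side tends to $1$ while each factor on the right is at most $1$, hence each $g(t,\yg_i^T\Mg(t)^{-1}\yg_i)\geq 1/2$ for small $t$; unimodality of $g(t,\cdot)$ (peak at $x_t$, kept in a fixed compact of $\mR_+^*$ by $(U3)$) then confines every $\yg_i^T\Mg(t)^{-1}\yg_i$ to a fixed interval $[a,b]$, and the fixed-point equation plus $(U1)$ yields the explicit matrix bounds $u(t,b)\Cg\leq \Mg(t)\leq u(t,a)\Cg$. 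You instead argue by contradiction and decouple direction from scale: the lower bound $H(t,\Mg(t))\geq H(t,\Pg)\geq c>0$ together with $H\leq B$ and the scale-invariance of $B$ rules out a degenerating direction, while the scale is pinned down by the trace identity $\sum_i v(t,\yg_i^T\Mg(t)^{-1}\yg_i)=Nm$ and the strict monotonicity of $v(t,\cdot)$ anchored at $v(t,x_t)=m$ --- a computation the paper only deploys later, in Lemma~\ref{lem6}. Your version is softer (no explicit $a,b$), whereas the paper's is more quantitative and directly delivers the compactness of the quadratic forms $\yg_i^T\Mg(t)^{-1}\yg_i$ that is reused in Lemma~\ref{lem6}; both rely on $(U3)$ in the same essential way, namely that $x_t$ stays in a fixed compact of $\mR_+^*$. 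One shared caveat, inherited from Lemma~\ref{lem2}: in general $\max_x g(t,x)=g(t,x_t)=x_t$, which equals $1$ only when $x_t=1$, so the bound $g\leq 1$ should be read as $g\leq x_t\leq \sup_{s\leq t_0}x_s<\infty$; this changes constants in both arguments but invalidates neither.
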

\proof[Lemma \ref{lem5}] Let $\Pg$ be the unique matrix of ${\rm PSD}_m$ satisfying $B(\Pg)=\max_{\Mg\in {\rm PSD}_m}B(\Mg)$ and $\Tr(\Mg)=m$. Then, for every $t>0$, $H(t,\Pg)\leq H(t,\Mg(t))$ and $B(\Mg(t))\leq B(\Pg)$. Multiplying both inequalities, after simplifications, we get
	$\prod_{i=1}^N g(t,\yg_i^T\Pg^{-1}\yg_i)\leq \prod_{i=1}^N g(t,\yg_i^T\Mg(t)^{-1}\yg_i)\leq 1$,
with $\prod_{i=1}^N g(t,\yg_i^T\Pg^{-1}\yg_i)\to 1$ as $t\to 0$. So there exists $t_0>0$ such that, for every $t\in (0,t_0)$ and $1\leq i\leq N$,
$1/2\leq g(t,\yg_i^T\Mg(t)^{-1}\yg_i)$,
and, since $(U3)$ holds true, there exists $0<a\leq b$ s.t. for every $t\in (0,t_0)$ and $1\leq i\leq N$, $a\leq \yg_i^T\Mg(t)^{-1}\yg_i\leq b$. This implies that, for every $t\in (0,t_0)$ and $1\leq i\leq N$, $u(t,b)\leq u(t,\yg_i^T\Mg(t)^{-1}\yg_i))\leq u(t,a)$, hence $u(t,b)\Cg\leq \Mg(t)\leq u(t,a)\Cg$ with 
$\Cg:=\frac{m}N\siN\cii$. One concludes easily.
 \EOP

\begin{lemma}\label{lem6}
Under the conditions of Theorem~\ref{th1}, $\lim_{t\to 0}\Mg(t)=\Mg_0$ solution of $({\rm Eq})_0$ given by $\Mg_0=\xi_u \Pg$, where
$\xi_u>0$ is the unique solution of \eqref{eq:xi0}.
\end{lemma}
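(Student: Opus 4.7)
The plan is to exploit the compactness furnished by Lemma~\ref{lem5}, extract subsequential limits of $\Mg(t)$ as $t\to 0^+$, push them into the solution half-line $\mR_+^*\,\Pg$ of $({\rm Eq})_0$ by a continuity argument in the fixed-point equation, and then pin down the remaining scalar factor using a trace identity combined with the first-order expansion $v(t,x)=m+tv_1(x)+tw(t,x)$.

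First I would use Lemma~\ref{lem5} to confine $\{\Mg(t)\}_{t\in (0,t_0)}$ to the compact set $\{a\Ig_m\leq \Mg\leq b\Ig_m\}\subset {\rm PSD}_m$; this automatically forces $\yg_i^T\Mg(t)^{-1}\yg_i\in [1/b,1/a]$, bounded away from $0$ and $\iy$. Along any sequence $t_n\to 0^+$ one can therefore extract a subsequence with $\Mg(t_n)\to \Mg^*\in {\rm PSD}_m$, and by continuity of $u$ on $(\mR_+)^2\setminus\{(0,0)\}$ (with $u(0,x)=m/x$) passing to the limit in
\begin{equation*}
\Mg(t_n)=\frac{1}{N}\siN u\bigl(t_n,\yg_i^T\Mg(t_n)^{-1}\yg_i\bigr)\yg_i\yg_i^T
\end{equation*}
shows that $\Mg^*$ solves $({\rm Eq})_0$. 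The characterisation of its solution set recalled before Theorem~\ref{th1} then yields $\Mg^*=\xi\,\Pg$ for some $\xi>0$.

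To identify $\xi$, I would whiten via $\dg_i(t):=\Mg(t)^{-1/2}\yg_i$, take the trace of the equivalent form $\Ig_m=(1/N)\siN u(t,\|\dg_i(t)\|^2)\dg_i(t)\dg_i(t)^T$, plug in the expansion of $v$, and divide by $t>0$ to reach
\begin{equation*}
0=\frac{1}{N}\siN v_1(\|\dg_i(t)\|^2)+\frac{1}{N}\siN w(t,\|\dg_i(t)\|^2).
\end{equation*}
Since the $\|\dg_i(t)\|^2$ remain in the fixed compact $[1/b,1/a]\subset \mR_+^*$, assumption $(U3)$ (with $u\in C^2$) makes $w(t,\cdot)\to 0$ uniformly there, so the second sum vanishes in the limit, while continuity of $v_1$ turns the first into $(1/N)\siN v_1(\yg_i^T\Pg^{-1}\yg_i/\xi)$. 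The limiting identity is therefore exactly \eqref{eq:xi0} with $\xi$ in place of $\xi_u$.

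To close, I would note that under $(U3)$ the function $v_1$ is strictly increasing with a unique zero $x_0>0$, so $\xi\mapsto \siN v_1(\yg_i^T\Pg^{-1}\yg_i/\xi)$ is strictly decreasing in $\xi$, strictly positive as $\xi\to 0^+$ and strictly negative as $\xi\to \iy$; hence there is a unique $\xi_u>0$ solving \eqref{eq:xi0}, and every subsequential limit of $\Mg(t)$ must equal $\xi_u\,\Pg$, yielding $\lim_{t\to 0}\Mg(t)=\Mg_0$. For $u=\bar u$, $\bar v_1(x)=m(1-1/x)$ turns \eqref{eq:xi0} into $\siN (1-\xi/\yg_i^T\Pg^{-1}\yg_i)=0$; tracing $({\rm Eq})_0$ for $\Pg$ and using $\|\yg_i\|=1$ with $\Tr(\Pg)=m$ delivers $\siN 1/\yg_i^T\Pg^{-1}\yg_i=N$, whence $\xi_{\bar u}=1$. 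The main obstacle, I expect, is the controlled vanishing of the remainder $w(t,\|\dg_i(t)\|^2)$ along the trajectory $t\mapsto \Mg(t)$: this is precisely what the uniform convergence in $(U3)$, combined with the uniform bounds of Lemma~\ref{lem5}, is designed to secure.
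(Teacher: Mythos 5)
Your proposal is correct and follows essentially the same route as the paper: compactness from Lemma~\ref{lem5} to extract accumulation points lying on the half-line $\mR_+^*\,\Pg$, the trace of the whitened equation \eqref{eq:id0} combined with the expansion \eqref{eq:v0} and the uniform vanishing of $w$ guaranteed by $(U3)$ to derive \eqref{eq:xi0}, and strict monotonicity of $v_1$ to get uniqueness of $\xi_u$ and hence of the accumulation point. Your added verification that $\xi_{\bar u}=1$ (via $\Tr(\Pg)=m$ and $\Vert\yg_i\Vert=1$) matches the claim in Theorem~\ref{th1}$(B)$ and is a harmless supplement.
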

\proof[Lemma \ref{lem6}] Since $\Mg(\cdot)$ is uniformly bounded in ${\rm PSD}_m$ as $t\to 0$, its accumulation points still belong to ${\rm PSD}_m$ and are necessarily of the form $\mu \Pg$ where $\mu>0$ and $\Pg$ is the solution of $({\rm Eq})_0$ with trace $m$. Taking the trace in \eqref{eq:id0}, one gets $m=\frac{1}N\siN v(t,\Vert \dg_{i}(t)\Vert^2)$,
where $\dg_{i}(t)=\Mg(t)^{-1/2}\yg_i$ for $1\leq i\leq N$. Using \eqref{eq:v0} and $(U3)$, one deduces that, for every $t>0$,
$\siN v_1(\Vert \dg_{i}(t)\Vert^2)+t\siN w_1(\Vert \dg_{i}(t)\Vert^2)+o(t)=0$.
Consider an accumulation point $\mu \Pg$ of $\Mg(\cdot)$ as $t\to 0$. Then, up to a subsequence, $\lim_{t\to 0}\Mg(t)=\mu \Pg$ and, for $1\leq i\leq N$,
$\lim_{t\to 0}\dg_{i}(t)={\Pg^{-1/2}\yg_i}/{\sqrt{\mu}}$. According to $(U3)$, the second sum in the previous equation tends to zero as $t\to 0$ and we are left with $\siN v_1({\yg_i^T\Pg^{-1}\yg_i}/{\mu})=0$. Since the left-hand side of the latter defines a decreasing function of $\mu$, it has a unique solution denoted $\xi_u>0$, which concludes the proof since $\Mg(\cdot)$ admits a unique accumulation point as $t\to 0$.
\EOP


\section{Conclusions}
{In this paper, an alternative proof for existence and uniqueness for the Maronna's $M$-estimators is provided. More importantly, using this particular approach leads to draw some connections between Maronna's and Tyler's estimators by expressing (properly scaled) Tyler's estimator in terms of a limit of a class of Maronna's estimators. This result may also find interest in studies of Tyler's $M$-estimator in the large random matrix regime.}

\bibliographystyle{IEEEbib}
\bibliography{biblio}

\end{document}